\DeclareMathOperator*{\argmax}{arg\,max}
\title{PTAS for Minimax Approval Voting}
\author{Jarosław Byrka\thanks{jby@cs.uni.wroc.pl},
	Krzysztof Sornat\thanks{krzysztof.sornat@cs.uni.wroc.pl}}
\institute{Institute of Computer Science, University of Wrocław\\
Joliot-Curie 15\\
50-383 Wrocław, Poland}
\begin{document}
\maketitle

\begin{abstract}
We consider Approval Voting systems where each voter decides on a subset of candidates he/she approves. We focus on the optimization problem of finding the committee of fixed size k, minimizing the maximal Hamming distance from a vote. In this paper we give a PTAS for this problem and hence resolve the open question raised by Carragianis et al. [AAAI'10]. The result is obtained by adapting the techniques developed by Li et al. [JACM'02] originally used for the less constrained Closest String problem. The technique relies on extracting information and structural properties of constant size subsets of votes. 
\end{abstract}

\section{Introduction}
Approval Voting systems are widely considered~\cite{brams_av_book} as an alternative to traditional elections, where each voter may select and support at most some small number of candidates.  
In Approval Voting each voter decides about every single candidate if he approves the candidate or does not approve him/her. 
A result is obtained by applying a predefined election rule to the set of collected votes. 

In this paper we study the problem of implementing an appropriate election rule and focus on the Minimax objective \cite{brams}: we minimize the biggest dissatisfaction over voters.
The resulting optimization problem is denoted $MAV$, and it is to select a committee composed of exactly $k$ candidates, and minimizing the maximal symmetric difference between the committee and the set of approved candidates by a single voter. 

Using the string terminology, votes are encoded as strings, and the goal is to find a string encoding a committee minimizing the maximal Hamming distance to an input string.
Unlike in the related Closest String problem, in $MAV$ there is also a constraint: the selected committee must be of fixed size $k$, and hence in the string terminology there must be exactly $k$ ones in the string.

\subsection{Related work and our results}
Many different objective functions have been proposed and studied in the context of selecting the committee based on the set of votes collected in an Approval Voting system~\cite{computations_av,brams_av_book}.
Clearly, optimizing the sum of Hamming distances to all votes is an easy task and can be done by simply selecting the $k$ candidates approved by the largest number of voters.
By contrast, Minimax Approval Voting was shown by LeGrand~\cite{nphard} to be NP-hard.
LeGrand et al.~\cite{kcompletion} obtained $3$-approximation by a very simple $k$-completion algorithm. Next, Carragianis et al.~\cite{markakis} gave the currently best $2$-approximation algorithm.
The algorithm was obtained by rounding a fractional solution to the natural LP relaxation of the problem, and obtained approximation ratio essentially matches the integrality gap of the LP. 

In this paper we give a PTAS for the Minimax Approval Voting problem. Our work is based on the PTAS for Closest String \cite{ptascs}, 
which is a similar problem to $MAV$ but there we do not have the restriction on the number of 1's in the result. 
Technically, our contribution is the method of handling the number of 1's in the output. We also believe that our presentation is somewhat more intuitive.

Approval Voting systems are also analyzed in respect of manipulability, see e.g., \cite{computations_av} or \cite{markakis}. In particular, \cite{markakis} proved that each strategy-proof algorithm for $MAV$ must have approximation ratio at least $2-\frac{2}{k+1}$, which implies that our PTAS cannot be strategy-proof.

\subsection{Definitions}
We will use the following notation:\\
$n$ -- number of voters,\\
$m$ -- number of candidates,\\
$s_i \in \{0,1\}^m$ -- a vote of voter $i$,\\
$s_i[j]=1$ if voter $i$ approves candidate $j$,\\
$s_i[j]=0$ if voter $i$ does not approve candidate $j$,\\
$S=\{s_1,s_2,\dotsc,s_n\}$ -- the set of collected votes,\\
$s^{(1)}=\big|\{j:s[j]=1\}\big|$ -- the number of 1's in $s$.\\
For $x,y \in [\,0,1]^m$ we define a distance $d(x,y) = \sum_{j=1}^m \big|x[j]-y[j]\big| = \lVert x-y \rVert _{1}$.\\
For $x,y \in \{0,1\}^m$, $d(x,y)$ is called the Hamming distance.

\begin{definition}\label{def_mavk}
\[ OPT=\min_{\substack{x \in \{0,1\}^m\\x^{(1)}=k}} \: \max_{i \in \{1,2,\dots,n\}} d(x,s_i)\] 
Let $s_{OPT}$ be an optimal solution, i.e., $\max_{i \in \{1,2,\dots,n\}} d(s_{OPT},s_i)=OPT$.
\end{definition}
WLOG we assume that $n>k$. If not, we copy the first string $k-n+1$ times.

\subsection{The main idea behind our algorithm}\label{name_consensus}
The general idea behind our PTAS is to find a small enough subset $X$ of votes that is a ``good representation'' of the whole set of votes $S$. 
Then the candidates are partitioned into those for which voters in $X$ agree and the rest of candidates. For the ``consensus candidates'' we fix our decision
to the decision induced by votes in $X$ (additionally correcting the number of selected candidates in the ``consensus'' set).
Next, we consider the optimization problem of finding a proper subset of the remaining candidates to join the committee.
The key insight is that there exists a small enough subset $X$ such that the induced decision for the ``consensus candidates'' will not be a big mistake.  

\subsection{Organization of the paper}
First, in Section~\ref{sec:info_subsets} we formalize the information we may extract from subset of votes, and introduce a measure of inaccuracy of such a subset. 
Next, in Section~\ref{sec:existence_of_subset} we prove the existence of a small subset of votes with stable inaccuracy.
In Section~\ref{sec:aux_prob} we show that the optimization problem of deciding the part of the committee not induced by the subset of votes can be approximated
with only a small additional loss in the objective function.
Finally, in Section~\ref{sec:alg} we give an algorithm considering all subsets of a fixed size and show that, in the iteration when the algorithm
happens to consider a subset with stable inaccuracy, it will produce a $(1+\epsilon)$-approximate solution to $MAV$.

\section{Extracting information from subsets}
\label{sec:info_subsets}
We consider subsets of votes and analyze the information they carry. We measure the inaccuracy of this information with respect to the set of all votes. We show that there exists a small subset with stable inaccuracy, i.e., the drop of inaccuracy after including one more vote is small.

Let us define an inaccuracy function $ina:2^S \mapsto \mathbb{R}_{\geqslant 0}$ that measures the inaccuracy if we will consider subset $Y \subseteq S$ instead of $S$. The smaller the $ina(Y)$ is the better the common parts of strings in $Y$ represent $s_{OPT}$.

\begin{definition}\label{def_ina}
 For all $Y \subseteq S, Y \neq \emptyset$ we define functions $t(Y) \in \{0,1\}^m$ and $ina(Y) \in \mathbb{R}_{\geqslant 0}$ as follows:
 \[\big(t(Y)\big)[j] =
  \begin{cases}
   0 & \text{if } \forall_{y \in Y} \quad y[j]=0\\
   1 & \text{if } \forall_{y \in Y} \quad y[j]=1\\
   s_{OPT}[j] & \text{otherwise,}
  \end{cases}\]
 \[ ina(Y) = d(t(Y),s_{OPT}).\]
\end{definition}

Intuitively $t(Y)$ is the optimal solution $s_{OPT}$ changed at positions where all strings from $Y$ agree. Also we define the pattern of a subset of votes.
\begin{definition}\label{def_p}
 For all $Y \subseteq S, Y \neq \emptyset$ we define pattern $p(Y) \in \{0,1,*\}^m$ as:
 \[ \big(p(Y)\big)[j]=
  \begin{cases}
   0 & \text{if } \forall_{y \in Y} \quad y[j]=0\\
   1 & \text{if } \forall_{y \in Y} \quad y[j]=1\\
   * & \text{otherwise.}
  \end{cases}
 \]
\end{definition}
It represents positions that all strings in $Y$ agree. ``$*$'' encodes a mismatch. 

Note that (from Definitions \ref{def_ina} and \ref{def_p}) $t(Y)$ is an optimal solution $s_{OPT}$ overwritten by a pattern $p_r$ on no-star positions:
 \[\big(t(Y)\big)[j] =
  \begin{cases}
   s_{OPT}[j]   & \text{if } \big(p(Y)\big)[j] = * \\
   \big(p(Y)\big)[j] & \text{otherwise.}
  \end{cases}
 \]

The inaccuracy function has the following properties:
\begin{lemma}\label{decreasing_ina}
 $\forall_{s_{i_1} \in S}$, for all sequences $\{s_{i_1}\}=Y_1 \subseteq Y_2 \subseteq \dots \subseteq Y_n = S$ we have
 \[OPT \geqslant ina(Y_1) \geqslant ina(Y_2) \geqslant \dots \geqslant ina(Y_n) = 0\]
\end{lemma}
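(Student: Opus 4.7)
The plan is to verify the chain of inequalities in three pieces: (i) the upper bound $ina(Y_1) \leq OPT$, (ii) the pointwise monotonicity $ina(Y_r) \geq ina(Y_{r+1})$ along the sequence, and (iii) the endpoint equality $ina(Y_n) = 0$.

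For (i), observe that for the singleton $Y_1 = \{s_{i_1}\}$ every coordinate falls in one of the first two branches of the piecewise definition of $t$, so $t(Y_1) = s_{i_1}$. Hence $ina(Y_1) = d(s_{i_1},s_{OPT}) \leq OPT$ by the definition of $OPT$.

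For (ii), I would prove the stronger pointwise claim: whenever $Y \subseteq Y'$, one has $|t(Y')[j] - s_{OPT}[j]| \leq |t(Y)[j] - s_{OPT}[j]|$ at every coordinate $j$, and then sum over $j$. The crux is that enlarging $Y$ can only shift a coordinate from the unanimous regime (where $t[j]$ equals the common value $v$, contributing $|v - s_{OPT}[j]| \in \{0,1\}$) into the disagreeing regime (where $t[j] = s_{OPT}[j]$, contributing $0$); the reverse shift is impossible, because any disagreement witnessed inside $Y$ persists inside $Y' \supseteq Y$. So each per-coordinate contribution is non-increasing, and summing gives the desired inequality.

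For (iii), the task reduces to showing that at every coordinate where all of $S$ is unanimous on some value $v$, one also has $s_{OPT}[j] = v$, which would force $t(S) = s_{OPT}$. My plan is to justify this through a swap argument on $s_{OPT}$: if $s_{OPT}[j] = 1-v$ at some unanimous position $j$, pick a coordinate $j'$ with $s_{OPT}[j'] = v$ and simultaneously flip both bits; a per-voter case analysis (distinguishing $s_i[j'] = v$ from $s_i[j'] = 1-v$) shows the distance $d(s_{OPT}, s_i)$ either stays equal or drops by $2$, so the maximum distance does not grow, while $s_{OPT}^{(1)} = k$ is preserved. I expect this last step to be the main obstacle, since in degenerate configurations one has to choose $j'$ carefully so that the swap does not merely relocate a unanimous mismatch; once an optimal $s_{OPT}$ matching all unanimous coordinates of $S$ is fixed, the equality $ina(Y_n) = 0$ is immediate from the definitions.
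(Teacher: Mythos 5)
Parts (i) and (ii) of your plan are correct and essentially reproduce the paper's own argument: for a singleton every coordinate falls into one of the two unanimous branches, so $t(Y_1)=s_{i_1}$ and $ina(Y_1)=d(s_{i_1},s_{OPT})\leqslant OPT$; and your per-coordinate monotonicity claim is exactly the paper's observation that $p(Y_{i+1})$ has at least as many stars as $p(Y_i)$, hence $t(Y_{i+1})$ agrees with $s_{OPT}$ wherever $t(Y_i)$ does.

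The real issue is part (iii), and you have correctly sensed it. The paper's proof simply writes $ina(S)=d(s_{OPT},s_{OPT})=0$, which tacitly assumes $t(S)=s_{OPT}$, i.e., that $s_{OPT}$ agrees with every position on which all voters are unanimous. Your swap argument is the right tool when this is achievable: if $s_{OPT}[j]=0$ at a unanimously approved position $j$ and the set $U_1$ of unanimously approved positions satisfies $|U_1|\leqslant k$, a counting argument yields $j'\notin U_1$ with $s_{OPT}[j']=1$; swapping $j$ and $j'$ never increases any $d(s_{OPT},s_i)$ and strictly decreases the number of unanimous mismatches, so the process terminates (symmetrically for unanimously disapproved positions when their number is at most $m-k$). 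However, the obstacle you flag is not merely a matter of choosing $j'$ carefully: when $|U_1|>k$, \emph{every} feasible committee must disagree with at least $|U_1|-k$ unanimous positions, so $ina(S)>0$ for every choice of optimal solution and the claimed equality $ina(Y_n)=0$ is simply false. Concretely, take $m=2$, $k=1$, $s_1=s_2=11$: then $t(S)=11$ while $s_{OPT}\in\{10,01\}$, so $ina(S)=1$. Even in the non-degenerate case one must additionally stipulate that the fixed $s_{OPT}$ of Definition~\ref{def_mavk} is a unanimity-respecting optimum, since $ina$ is defined relative to that particular string. So your plan cannot be completed as stated, but neither can the paper's one-line assertion; the damage is contained because the only downstream use of this endpoint (the telescoping bound in Lemma~\ref{lem_exists_x}) needs only $ina(\cdot)\geqslant 0$, which holds by definition.
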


\begin{proof}
 
 It is easy to see that
 \[ina(Y_1) \stackrel{\text{def.}}{=} d\big(t(Y_1),s_{OPT}\big)=d\big(t(\{s_{i_1}\}),s_{OPT}\big)=d\big(s_{i_1},s_{OPT}\big) \leqslant OPT,\]
 \[ina(Y_n) = ina(S) = d(s_{OPT},s_{OPT}) = 0.\]
 Still we need to prove $ina(Y_i) \geqslant ina(Y_{i+1})$. Pattern $p(Y_{i+1})$ is built on strings from $Y_i \subseteq Y_{i+1}$ and strings from $Y_{i+1} \setminus Y_i$. So $p(Y_{i+1})$ has at least as many $*$ as $p(Y_i)$ has. Therefore $t(Y_{i+1})$ has at least as many positions as $t(Y_i)$ has that agree with optimal solution $s_{OPT}$, so $d\big(t(Y_i),s_{OPT}\big) \geqslant d\big(t(Y_{i+1}),s_{OPT}\big)$. Using definition of the inaccuracy function (Definition \ref{def_ina}) we prove the lemma. $\hfill\blacksquare$
\end{proof}

Intuitively $ina(Y)-ina(Y \cup \{y\})$ is the decrease of the inaccuracy from adding element $y$ to set $Y$. We will show that, when adding one more element $y$ to sets $Y,Z$ such that $Y \subseteq Z$, the inaccuracy decrease more in a case of adding $y$ to the smaller set $Y$ than adding $y$ to the bigger set $Z$.
\begin{lemma}\label{supermodular_ina}
 If we artificially extend the $ina(\cdot)$ function for the empty set:\\ $ina(\emptyset)=2\cdot OPT$, then function $ina(\cdot)$ is supermodular\footnote{according to \cite{schrijver}, $f:2^S\mapsto \mathbb{R}$ is supermodular iff\\ $\forall_{Y,Z \subseteq S} \quad f(Y)+f(Z)\leqslant f(Y \cup Z)+f(Y\cap Z)$ which is equivalent with $\forall_{Y \subseteq Z \subseteq S} \quad\forall_{s\in S} \quad f(Z)-f(Z\cup\{s\})\leqslant f(Y)-f(Y \cup\{s\})$.}, i.e.,
 \begin{equation}\label{supermodular_ina_ieq}
  \forall_{Y \subseteq Z \subseteq S} \quad\forall_{s \in S} \quad ina(Z)-ina(Z \cup \{s\}) \leqslant ina(Y)-ina(Y \cup \{s\})
 \end{equation}
\end{lemma}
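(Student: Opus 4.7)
The plan is to prove the supermodularity inequality by a direct, position-by-position analysis of the marginal $ina(Y)-ina(Y\cup\{s\})$, separating the edge case $Y=\emptyset$ from the main case in which both $Y$ and $Z$ are non-empty.

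For the main case, I would first derive a clean combinatorial expression for the marginal. Going through the four possible configurations at a single position $j$, one sees that $ina$ drops at $j$ when we add $s$ precisely when: (i) $p(Y)[j]\neq *$, so $t(Y)[j]=p(Y)[j]$; (ii) $s[j]\neq p(Y)[j]$, so position $j$ turns into a $*$ in $p(Y\cup\{s\})$, which flips $t(Y\cup\{s\})[j]$ back to $s_{OPT}[j]$; and (iii) $p(Y)[j]\neq s_{OPT}[j]$, so the previously agreed value was actually a mismatch with the optimum. Over $\{0,1\}$, these three conditions collapse to the single description
\[A_Y(s) \;=\; \{\,j : s[j]=s_{OPT}[j] \text{ and } \forall y\in Y,\; y[j]\neq s_{OPT}[j]\,\},\]
and consequently $ina(Y)-ina(Y\cup\{s\})=|A_Y(s)|$; the analogous identity holds for $Z$.

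Given this expression, the monotonicity is almost tautological: enlarging the base set from $Y$ to $Z\supseteq Y$ only strengthens the universal quantifier in the definition of $A_\cdot(s)$, so $A_Z(s)\subseteq A_Y(s)$ and hence $|A_Z(s)|\leqslant|A_Y(s)|$, which is exactly \eqref{supermodular_ina_ieq}. For the corner case $Y=\emptyset$ and $Z\neq\emptyset$, the artificial value $ina(\emptyset)=2\cdot OPT$ does the work: since $ina(\{s\})=d(s,s_{OPT})\leqslant OPT$ we have $ina(\emptyset)-ina(\{s\})\geqslant OPT$, while Lemma~\ref{decreasing_ina} yields $ina(Z)\leqslant OPT$ and therefore $ina(Z)-ina(Z\cup\{s\})\leqslant ina(Z)\leqslant OPT$, so the inequality is preserved. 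The case $Y=Z=\emptyset$ is trivial.

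The only mildly delicate step is the case analysis that identifies the marginal $|A_Y(s)|$: one has to keep track simultaneously of whether position $j$ is already a $*$ in $p(Y)$, whether $s[j]$ breaks the consensus, and whether the consensus value agreed with $s_{OPT}[j]$ in the first place. Once that bookkeeping is done, supermodularity is just the observation that the set $A_Y(s)$ shrinks with $Y$, and the empty-set case is absorbed cleanly by the chosen convention $ina(\emptyset)=2\cdot OPT$.
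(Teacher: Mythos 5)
Your proposal is correct and follows essentially the same route as the paper: the same three-way case split on whether $Y$ and $Z$ are empty, the same identification of the marginal $ina(Y)-ina(Y\cup\{s\})$ with the count of positions where $s$ agrees with $s_{OPT}$ but every vote in $Y$ disagrees (the paper writes this as two sets split by the value of $s_{OPT}[j]$; your $A_Y(s)$ merges them), and the same observation that strengthening the universal quantifier from $Y$ to $Z$ can only shrink that set. The empty-set corner case is handled identically via $ina(\emptyset)=2\cdot OPT$ and the bounds $ina(\{s\})\leqslant OPT$, $ina(Z)\leqslant OPT$.
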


\begin{proof}
 Let fix $Y,Z$ and $s$ such that $Y \subseteq Z \subseteq S$ and $s \in S$.
 
 \paragraph{\textbf{Case 1:}} $Z = \emptyset$:
 
 Then also $Y = \emptyset$, and inequality (\ref{supermodular_ina_ieq}) holds obviously.
 
 \paragraph{\textbf{Case 2:}} $Z \neq \emptyset, Y=\emptyset$:
 
 We have:
 \[ina(Z)-ina(Z \cup \{s\}) \leqslant OPT = \]
 \begin{equation}\label{supermodular_ina_c2}
  = 2\cdot OPT-OPT \leqslant ina(\emptyset)-ina(\{s\}) = ina(Y)-ina(Y \cup \{s\}),  
 \end{equation}
 because we use respectively: Lemma \ref{decreasing_ina} and the fact that $Z$ has at least one element; definition of $ina(\cdot)$ for empty set and upperbound for $ina(\cdot)$ function; assumption that $Y = \emptyset$.
 
 \paragraph{\textbf{Case 3:}} $Z \neq \emptyset, Y \neq \emptyset$:
 
 From definition of $ina(\cdot)$ we have:
 \[ ina(Z)-ina(Z \cup \{s\}) = d\big(t(Z),s_{OPT}\big)-d\big(t(Z \cup \{s\}),s_{OPT}\big) = \]
 counting a difference by considering two cases for value of $s_{OPT}$ we obtain
 \[ = \Big| \big\{ j: s_{OPT}[j]=1 \wedge t(Z \cup \{s\})[j]=1 \wedge t(Z)[j]=0 \big\}\Big| + \]
 \[ +\;\, \Big| \big\{ j: s_{OPT}[j]=0 \wedge t(Z \cup \{s\})[j]=0 \wedge t(Z)[j]=1 \big\}\Big| = \]
 using definition of function $t(\cdot)$:
 \[ = \Big| \big\{ j: s_{OPT}[j]=1 \wedge s[j]=1 \wedge \;\forall_{z \in Z} \; z[j]=0 \big\}\Big| + \]
 \[ +\;\, \Big| \big\{ j: s_{OPT}[j]=0 \wedge s[j]=0 \wedge \;\forall_{z \in Z} \; z[j]=1 \big\}\Big| \leqslant \]
 taking an universal quantifier over a smaller subset we obtain: 
 \[ \leqslant \Big| \big\{ j: s_{OPT}[j]=1 \wedge s[j]=1 \wedge \;\forall_{y \in Y} \; y[j]=0 \big\}\Big| + \]
 \[ + \;\,\Big| \big\{ j: s_{OPT}[j]=0 \wedge s[j]=0 \wedge \;\forall_{y \in Y} \; y[j]=1 \big\}\Big| = \]
 reversing all previous transformations finally we obtain:
 \[ = ina(Y)-ina(Y \cup \{s\}). \]$\hfill\blacksquare$
\end{proof}

\section{Existence of a stable subset}
\label{sec:existence_of_subset}
\begin{lemma}\label{lem_exists_x}
For any fixed $R\in\mathbb{N}_{\geqslant 1}$ there exists a subset $X \subseteq S, |X| = R$ such that
\begin{equation}\label{lem_x_exists_ina}
 \forall_{s \in S \setminus X} \quad ina(X) - ina(X \cup \{s\}) \leqslant \frac{OPT}{R}.
\end{equation}
We say such $X$ is $\frac{OPT}{R}$-stable.
\end{lemma}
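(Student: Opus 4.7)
The plan is to build $X$ greedily as the $R$-th set in a chain of subsets of $S$, where at each step we append the vote that causes the largest drop in $ina$, and then use supermodularity (Lemma~\ref{supermodular_ina}) to show that these drops are non-increasing. Concretely, pick any $s_{i_1}\in S$, set $X_1=\{s_{i_1}\}$, and for $j=1,\dots,R$ let
\[ X_{j+1} \;=\; X_j \cup \{s_j^*\}, \qquad s_j^* \in \argmax_{s\in S\setminus X_j}\bigl(ina(X_j)-ina(X_j\cup\{s\})\bigr),\]
writing $\delta_j := ina(X_j)-ina(X_{j+1})$ for the $j$-th marginal drop. (This requires $n\geqslant R+1$; otherwise we duplicate votes as in the standing assumption $n>k$.)

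Next I would prove $\delta_1\geqslant \delta_2\geqslant \dots\geqslant \delta_R$. Fix $j$, take any $s\in S\setminus X_{j+1}$, and apply Lemma~\ref{supermodular_ina} to $Y=X_j\subseteq Z=X_{j+1}$ (Case~3, both sets non-empty): $ina(X_{j+1})-ina(X_{j+1}\cup\{s\})\leqslant ina(X_j)-ina(X_j\cup\{s\})\leqslant \delta_j$, where the last step uses the greedy choice of $s_j^*$. Maximising the left side over $s\in S\setminus X_{j+1}$ gives $\delta_{j+1}\leqslant \delta_j$.

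Finally I would combine these observations with the telescoping identity and Lemma~\ref{decreasing_ina}:
\[ \sum_{j=1}^{R}\delta_j \;=\; ina(X_1)-ina(X_{R+1}) \;\leqslant\; ina(\{s_{i_1}\}) \;\leqslant\; OPT.\]
Because the $\delta_j$ are non-increasing and sum to at most $OPT$, the smallest (and last) of them satisfies $\delta_R\leqslant OPT/R$. By the greedy definition of $\delta_R$ we have $\max_{s\in S\setminus X_R}\bigl(ina(X_R)-ina(X_R\cup\{s\})\bigr)=\delta_R\leqslant OPT/R$, so setting $X:=X_R$ (which has size $R$) proves the lemma.

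The main conceptual step is the monotonicity of the drops: a pure averaging argument would only give a small drop somewhere in the chain, not specifically at position $R$. Combining the greedy choice with supermodularity is exactly what promotes the averaging bound into a pointwise statement at the last step of the chain, which is where we need the set of prescribed size $R$.
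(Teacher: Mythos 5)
Your proof is correct. It shares the paper's skeleton — the same greedy chain $X_1\subset X_2\subset\dotsc$ selected by $\argmax$ of the marginal drop, the same telescoping bound $\sum_{j=1}^{R}\delta_j\leqslant ina(X_1)\leqslant OPT$ via Lemma~\ref{decreasing_ina}, and the same reliance on supermodularity (Lemma~\ref{supermodular_ina}) — but you deploy supermodularity at a different point. The paper only concludes that \emph{some} prefix $S_{\underline{r}}$ with $\underline{r}\leqslant R$ has a small drop, and then invokes supermodularity to pad $S_{\underline{r}}$ up to exactly $R$ elements while preserving stability. You instead combine supermodularity with the greedy choice to show that the drops $\delta_1\geqslant\delta_2\geqslant\dotsc\geqslant\delta_R$ are non-increasing along the chain, so the minimizer sits at position $R$ and $X_R$ itself already has the required cardinality — no padding step is needed. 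Your monotonicity argument is sound: for $s\in S\setminus X_{j+1}$, supermodularity gives $ina(X_{j+1})-ina(X_{j+1}\cup\{s\})\leqslant ina(X_j)-ina(X_j\cup\{s\})$, and the greedy choice of $s_j^*$ bounds the right-hand side by $\delta_j$; maximizing over $s$ yields $\delta_{j+1}\leqslant\delta_j$. This is a slightly tighter and arguably cleaner use of the same ingredients; the paper's version buys nothing extra here, though its padding argument also shows the more flexible fact that \emph{any} superset of a stable set remains stable, which is what the concluding remarks allude to. Both proofs silently assume $n\geqslant R+1$ so that the chain can be extended one step past $X_R$ (otherwise $S\setminus X_R=\emptyset$ and the claim is vacuous but the cardinality requirement $|X|=R$ cannot be met); your parenthetical remark handles this no worse than the paper does.
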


It means that there exists such a subset of votes $X$ that adding one more vote into $X$ the inaccuracy decreases by at most $\frac{OPT}{R}$.

\begin{proof}
 First, we construct $S_{\underline{r}}$ satisfying (\ref{lem_x_exists_ina}) with at most $R$ elements.
 
 Let us construct a sequence of subsets $S_1 \subset S_2 \subset \dotsc \subset S_n = S, |S_i|=i$. We take $S_1=\{s_{i_1}\}$, where $s_{i_1}$ is any element of $S$ and for $r \in \{2,3,\dots,n\}$ we take $S_{r}=S_{r-1}\cup\{s_{i_r}\}$ where $s_{i_r}$ is such a vote that after adding it the~inaccuracy function decreases the most, i.e.,
 \begin{equation}\label{thm_sir}
  s_{i_r} = \argmax_{s \in S\setminus S_{r-1}} \big(ina(S_{r-1})-ina(S_{r-1} \cup \{s\})\big).  
 \end{equation}
 \begin{figure}\label{fig:ina}
  \centering
  \includegraphics[scale=1.15]{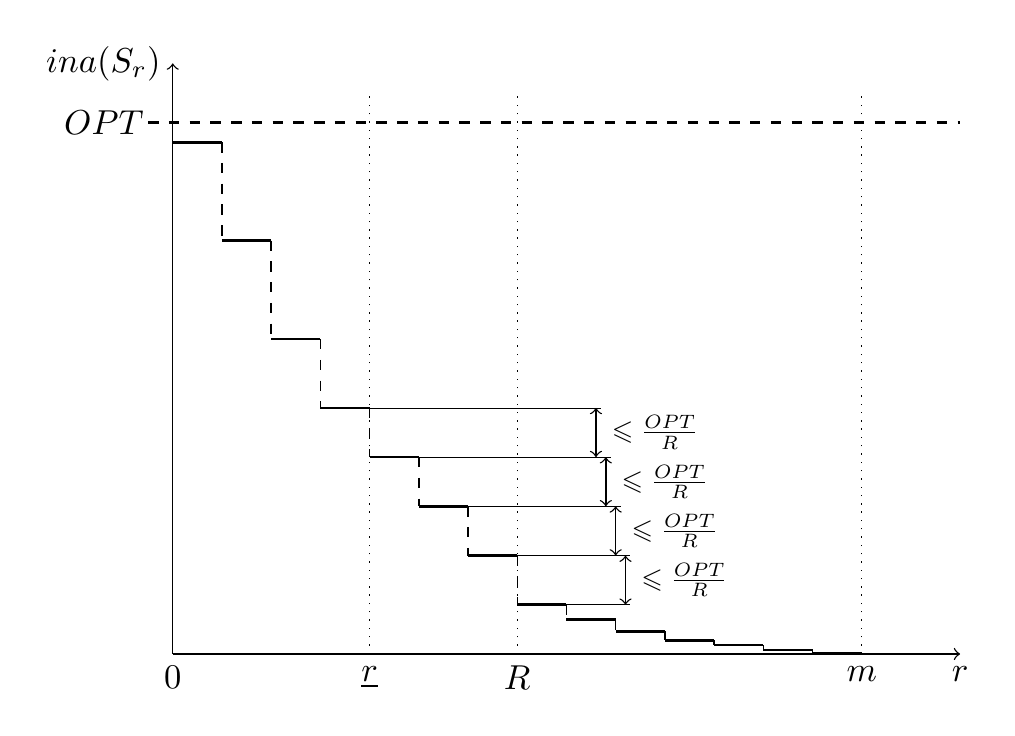} 
  \vspace{-10pt}
  \caption{The $ina(\cdot)$ function for the sequence of subsets $S_1 \subset S_2 \subset \dotsc \subset S_n = S$.}
 \end{figure}
 We have
 \[ \min_{r \in \{1,2,\dots,R\}} \quad ina(S_r)-ina(S_{r+1}) \leqslant \frac{1}{R} \left(\sum_{r=1}^R ina(S_r)-ina(S_{r+1})\right) = \]
 \begin{equation}\label{thm_min_r}
 = \frac{1}{R} \big( ina(S_1)-ina(S_{R+1}) \big) \leqslant \frac{OPT}{R},  
 \end{equation}
 because (from Lemma \ref{decreasing_ina}) we know that $ina(S_1) \leqslant OPT$ and $ina(S_{R+1}) \geqslant 0$. Let $\underline{r}$ be a minimizer for the left-hand side of (\ref{thm_min_r}), then (by the choice of $s_{i_{\underline{r}}}$ in (\ref{thm_sir})) we have:
 \begin{equation}\label{thm_max_diff_ina}
  \max_{s \in S\setminus S_{\underline{r}}} \big(ina(S_{\underline{r}})-ina(S_{\underline{r}} \cup \{s\})\big)\leqslant \frac{OPT}{R},
 \end{equation}
 thus $S_{\underline{r}}$ satisfies (\ref{lem_x_exists_ina}), see Figure 1. If $S_{\underline{r}}$ has less elements than $R$ we can extend $S_{\underline{r}}$ to an $R$-elements subset $X$ by adding any elements of $S$. It follows from the supermodularity of $ina(\cdot)$. From Lemma \ref{supermodular_ina} we have:
 \[\forall_{s \in S \setminus S_{\underline{r}}} \quad ina(X)-ina(X \cup \{s\}) \leqslant ina(S_{\underline{r}})-ina(S_{\underline{r}} \cup \{s\}), \]
 and hence also:
 \begin{equation}\label{thm_max_sx_x}
  \max_{s \in S \setminus S_{\underline{r}}} \big(ina(X)-ina(X \cup \{s\})\big) \leqslant \max_{s \in S \setminus S_{\underline{r}}} \big(ina(S_{\underline{r}})-ina(S_{\underline{r}} \cup \{s\})\big).
 \end{equation}
 Finally, taking (\ref{thm_max_diff_ina}) and (\ref{thm_max_sx_x}) we obtain:
 \[\max_{s \in S \setminus X} \big(ina(X)-ina(X \cup \{s\})\big) \leqslant \frac{OPT}{R}.\]$\hfill\blacksquare$
\end{proof}
Of course we cannot construct such a subset efficiently if we do not know $s_{OPT}$. How to find a proper subset $X$? For constructing our PTAS we will fix $R \in \mathbb{N}_{\geqslant 1}$ and consider all subsets $Y \subseteq S$ with cardinality $R$. There is less than $n^R \in Poly(n)$ such subsets. For clarity, we will use $Y \subseteq S$ in arguments valid for all subsets considered by the algorithm, and $X \subseteq S$ for a $\frac{OPT}{R}$-stable subset of votes. 

For a fixed  $Y \subseteq S, Y \neq \emptyset$, WLOG we reorder candidates in such a way that $p(Y)$ is a lexicographically smallest permutation:
\[p(Y)=**\dotsc *00\dotsc 011\dotsc 1.\]
The first part (from the left) is called ``star positions'' or ``star part''. The remaining part is called ``no-star part''. We define $p^{(*)}(Y)$ as the number of $*$ in $p(Y)$ and we denote it $\beta$:
\[ \beta = p^{(*)}(Y) = \Big|\left\{ j:\big(p(Y)\big)[j]=* \right\}\Big|.\]

In our PTAS we essentially fix the ``no-star part'' of the answer to the pattern $p(Y)$ and optimize over the choices for the ``star part'' of the outcome. If the~number of stars or number of 1's on star positions of $s_{OPT}$ is small enough, then there is only $Poly(m,n)$ possible solutions and we can consider all of them. Let us analyze the size of the ``star part''.
\begin{lemma}\label{size_p_star_y}
 For all $Y \subseteq S$ we have
 \[\beta = p^{(*)}(Y) \leqslant |Y| \cdot OPT\]
\end{lemma}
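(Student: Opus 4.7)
The plan is a simple counting (double-counting) argument based on the fact that each star position forces at least one vote in $Y$ to disagree with $s_{OPT}$ at that position.

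First I would unpack the definition: position $j$ is a star position of $p(Y)$ exactly when the votes in $Y$ do not all agree on coordinate $j$. Since each $y[j] \in \{0,1\}$, this means there exist $y, y' \in Y$ with $y[j] = 0$ and $y'[j] = 1$. Consequently, whichever value $s_{OPT}[j]$ takes, at least one vote in $Y$ differs from $s_{OPT}$ at position $j$. Formally, for every star position $j$,
\[ \big|\{ y \in Y : y[j] \neq s_{OPT}[j]\}\big| \geqslant 1. \]

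Next I would sum this inequality over all star positions $j$. The left-hand side, after swapping the order of summation, equals $\sum_{y \in Y} |\{ j : (p(Y))[j] = * \text{ and } y[j] \neq s_{OPT}[j]\}|$, which is bounded above by $\sum_{y \in Y} d(y, s_{OPT})$ (dropping the restriction to star positions only enlarges each term). The right-hand side is exactly $\beta$. This gives
\[ \beta \leqslant \sum_{y \in Y} d(y, s_{OPT}) \leqslant |Y| \cdot OPT, \]
where the last inequality uses the definition of $OPT$ as the maximum distance from $s_{OPT}$ to any vote.

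I do not anticipate a real obstacle here; the only subtlety is noticing that the definition of a star position guarantees disagreement in both directions (there is both a $0$ and a $1$ among $\{y[j] : y \in Y\}$), so the bound on mismatches with $s_{OPT}$ is independent of what $s_{OPT}[j]$ happens to be.
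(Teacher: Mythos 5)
Your proof is correct and rests on the same key observation as the paper's: every star position witnesses a disagreement between some vote in $Y$ and $s_{OPT}$, so the stars can be charged to the at most $|Y|\cdot OPT$ total mismatches $\sum_{y\in Y} d(y,s_{OPT})$. The paper packages this as an incremental construction (start from $\{s_{OPT}\}$, add the votes one by one, each adding at most $OPT$ new stars), whereas you do a direct double count, but the argument is essentially identical.
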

\begin{proof}
   Consider an arbitrary $Y=\{y_1,y_2,\dots,y_{|Y|}\}$. We can construct $Y$ in the~following 3 phases:
   \begin{enumerate}
   \item[1.] $Y := \{s_{OPT}\}$
   \item[2.] for $i=1$ to $|Y|$ do
   \item[] \quad $Y := Y \cup \{y_i\}$
   \item[3.] $Y := Y \setminus \{s_{OPT}\}$
   \end{enumerate}
   After that we obtain set Y. Let us calculate how many stars $p(Y)$ has. In Phase~1 there are no stars. In each step in Phase~2 we add at most $OPT$ stars, because $\forall_{i\in \{1,2,\dots,|Y|\}} \quad d(y_i,s_{OPT}) \leqslant OPT$. In Phase 3 we can at most decrease the~number of stars. So $\beta \leqslant |Y| \cdot OPT$.$\hfill\blacksquare$
\end{proof}

Note that for $X$ from Lemma \ref{lem_exists_x} we have
\begin{equation}\label{size_p_star} 
 p^{(*)}(X) \leqslant |X| \cdot OPT = R \cdot OPT.
\end{equation}

Let us now introduce some more notation. Assuming $Y \subseteq S$ and hence also $\beta = p^{(*)}(Y)$ are fixed,
we will use the following notation to denote the ``star part'' and the ``no-star part'' of a string $x \in \{0,1\}^m$: 
\[x'=x[1]\cdot x[2]\cdot \dotsc\cdot x[\beta],\]
\[x''=x[\beta+1]\cdot x[\beta+2]\cdot\dotsc\cdot x[m],\]
where``$\cdot$'' is a concatenation of strings (letters).
So we divide $x$ into two parts: $x=x'\cdot x''$. 

Let us now define a $k$-completion of $x \in \{0,1\}^m$ (definition from \cite{kcompletion}) to be a $y \in \{0,1\}^m$ such that $y^{(1)}=k$ and $d(y,x)$ is the minimum possible Hamming distance between $x$ and any vector with $k$ of 1's. To obtain a $k$-completion we only add or only delete a proper number of 1's. To be more specific in this paper we assume the $k$-completion is always obtained by changing bits at positions with the smallest possible index\footnote{Any other deterministic rule would work for us just as well.}.

In the following lemma we will show that for the pattern from a stable subset $X$ we can change the number of 1's in the ``no-star part'' to the properly guessed number of 1's loosing only twice the stability constant.
\begin{lemma}\label{lemma_kbiscompletion}
 If $X \subseteq S$ is $(\epsilon_1 \cdot OPT)$-stable, $z''$ is a $k''$-completion of $\big(p(X)\big)''$, where $k''=(s_{OPT}'')^{(1)}$, then
 \begin{equation}\label{lemma_kbis_completion}
  \forall_{i \in \{1,2,\cdots,n\}} \quad d(s_{OPT}'\cdot z'',s_i) \leqslant (1+2\epsilon_1) \cdot OPT
 \end{equation}
\end{lemma}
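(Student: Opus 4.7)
My plan is to fix an arbitrary $s_i$ and classify the no-star positions $\{\beta+1,\dots,m\}$ by the pair $\bigl((p(X))[j],s_{OPT}[j]\bigr)$: let $a$ (resp.\ $b$) be the number of positions with pattern $(1,0)$ (resp.\ $(0,1)$), and call the remaining no-star positions ``consensus positions.'' Then $(p(X))''^{(1)}-k''=a-b$, so by the definition of a $k''$-completion, $d(z'',(p(X))'')=|a-b|$.

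The starting point is one triangle inequality
\[d(s_{OPT}'\cdot z'',s_i)\leqslant d(s_{OPT}',s_i')+d(z'',(p(X))'')+d((p(X))'',s_i'')=d(s_{OPT}',s_i')+|a-b|+d((p(X))'',s_i''),\]
reducing the problem to bounding the no-star terms. Introduce $A_i$ (resp.\ $B_i$) as the number of $a$-positions (resp.\ $b$-positions) where $s_i$ agrees with $s_{OPT}$, and let $\gamma_i$ count disagreements of $s_i$ with $s_{OPT}$ on the consensus no-star positions. A direct count by the four position classes gives
\[d((p(X))'',s_i'')=A_i+B_i+\gamma_i,\qquad d(s_i'',s_{OPT}'')=(a-A_i)+(b-B_i)+\gamma_i,\]
so using $d(s_i,s_{OPT})\leqslant OPT$ and rearranging,
\[d(s_{OPT}',s_i')+\gamma_i\leqslant OPT-(a+b)+(A_i+B_i).\]

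The key step, and the one I expect to be the main obstacle in presentation, is the identity $ina(X)-ina(X\cup\{s_i\})=A_i+B_i$: adding $s_i$ to $X$ eliminates from $t(\cdot)$ exactly those $a$- and $b$-positions on which $s_i$ sides with $s_{OPT}$, because at such positions $X\cup\{s_i\}$ no longer agree and $t(\cdot)$ switches to $s_{OPT}$ there, while $c$-positions are unaffected. Stability then yields $A_i+B_i\leqslant\epsilon_1\cdot OPT$ for $s_i\in S\setminus X$, and trivially $A_i=B_i=0$ for $s_i\in X$. Plugging into the triangle-inequality chain,
\[d(s_{OPT}'\cdot z'',s_i)\leqslant\bigl(OPT-(a+b)+(A_i+B_i)\bigr)+|a-b|+(A_i+B_i)=OPT-2\min(a,b)+2(A_i+B_i)\leqslant(1+2\epsilon_1)\cdot OPT,\]
where the last step drops the nonpositive $-2\min(a,b)$ contribution. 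Apart from matching the combinatorial quantity $A_i+B_i$ to the stability decrement, every step is a triangle inequality or direct counting.
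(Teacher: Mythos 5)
Your proof is correct, and it rests on exactly the same key observation as the paper's: the stability decrement $ina(X)-ina(X\cup\{s_i\})$ counts precisely the no-star positions where $X$ unanimously disagrees with $s_{OPT}$ but $s_i$ sides with $s_{OPT}$ (your $A_i+B_i$; the paper's $\alpha_4+\alpha_8+\alpha_9$). Where you differ is in how the no-star distance is handled. The paper computes $d(z'',s_i'')$ \emph{exactly}: it tabulates the twelve possible value combinations of $\bigl((p(X))''[j],z''[j],s_i''[j],s_{OPT}''[j]\bigr)$ under a WLOG assumption on the direction of the completion, derives the identity $d(z'',s_i'')=d(s_{OPT}'',s_i'')-\alpha_3-\alpha_6-\alpha_{10}+\alpha_4+\alpha_5+\alpha_9$, and eliminates $\alpha_5$ using $(z'')^{(1)}=(s_{OPT}'')^{(1)}$, ending with the clean equality $d(s_{OPT}'\cdot z'',s_i)=d(s_{OPT},s_i)+2(\alpha_4-\alpha_6-\alpha_{10})$. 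You instead route through the triangle inequality $d(z'',s_i'')\leqslant d(z'',(p(X))'')+d((p(X))'',s_i'')$, observing that the completion cost is exactly $|a-b|$, and then do a four-class position count; the slack you give up in the triangle inequality is exactly absorbed by the discarded term $-2\min(a,b)$, so the final bound is identical. Your version avoids both the case table and the WLOG on the completion direction, at the price of an upper bound rather than an identity for the intermediate distance; the paper's version is more mechanical but yields an exact expression that makes the error term $2(\alpha_4-\alpha_6-\alpha_{10})$ visible. All of your individual claims (the value $|a-b|$ of the completion cost, the two counting identities, the identification of $A_i+B_i$ with the inaccuracy decrement, and $A_i=B_i=0$ for $s_i\in X$) check out.
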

\begin{proof}
 WLOG there is insufficient number of 1's in no-star part of pattern $p(X)$, i.e., $k'' \geqslant \big((p(X))''\big)^{(1)}$. The other case is symmetric.
 
 Let us fix $s_i \in S$ and consider all combinations of values in strings $\big(p(X)\big)''$, $z''$, $s_i''$, $s_{OPT}''$ at the same position $j$. $\alpha_a \in \mathbb{N}$, for $a\in \{1,2,\cdots,12\}$, counts the number of positions $j$ with combination $a$, see Table 1.
 \begin{table}[h]\label{table_cases}
 \centering
 \begin{tabular}{|c|c|c|c|c|c|c|c|c|c|c|c|c|}
 \hline
 & \multicolumn{12}{c|}{combinations}            \\ \hline
 \quad index of a combination \quad  & 1 & 2 & 3 & 4 & 5 & 6 & 7 & 8 & 9 & 10& 11& 12\\ \hline
 $(p(X))''[j]$  & 0 & 0 & 0 & 0 & 0 & 0 & 0 & 0 & 1 & 1 & 1 & 1 \\ 
 $z''[j]$       & 0 & 0 & 0 & 0 & 1 & 1 & 1 & 1 & 1 & 1 & 1 & 1 \\ 
 $s_i''[j]$     & 0 & 1 & 0 & 1 & 0 & 1 & 0 & 1 & 0 & 1 & 0 & 1 \\ 
 $s_{OPT}''[j]$ & 0 & 0 & 1 & 1 & 0 & 0 & 1 & 1 & 0 & 0 & 1 & 1 \\ \hline
 number of occurrences & $\alpha_1$ & $\alpha_2$ & $\alpha_3$ & $\alpha_4$ & $\alpha_5$ & $\alpha_6$ & $\alpha_7$ & $\alpha_8$ & $\alpha_9$ & $\alpha_{10}$ & $\alpha_{11}$ & $\alpha_{12}$ \\ \hline
 $d(z''[j],s_i''[j])$     & 0 & 1 & 0 & 1 & 1 & 0 & 1 & 0 & 1 & 0 & 1 & 0 \\ \hline
 $d(s''_{OPT}[j],s_i''[j])$ & 0 & 1 & 1 & 0 & 0 & 1 & 1 & 0 & 0 & 1 & 1 & 0 \\ \hline
 \end{tabular}
 \vspace{5pt}
 \caption{Combinations of values in strings $(p(X))'', z'', s_i'', s_{OPT}''$. There is only 12 combinations (no $2^4=16$), because by the assumption $k'' \geqslant ((p(X))'')^{(1)}$ we never change from 1 in $((p(X))'')^{(1)}$ to 0 in $z''$.}
 \vspace{-15pt}
 \end{table}
 
 We have:
 \[ d(z'',s_i'') = \big|\{j: z''[j] \neq s_i''[j] \}\big| = \]
 we consider two cases for value of $s_{OPT}$ at position $j$:
 \[ = \big|\{j: z''[j] \neq s_i''[j] \wedge (z''[j] = s_{OPT} \vee z''[j] \neq s_{OPT})\}\big| = \]
 we divide it into two components:
 \begin{alignat*}{4}
   = &\big|\{j: s_{OPT} = && \; z''[j] \neq s_i''[j]            &&\}\big| + \\
   + &\big|\{j:           && \; z''[j] \neq s_i''[j] = s_{OPT}) &&\}\big| = 
 \end{alignat*}
 we use case counts from Table 1 to count positions in both components:
 \[ = ( \underbrace{\alpha_2 + \alpha_7 + \alpha_{11}}_\text{first component} + \underbrace{\alpha_3 + \alpha_6 + \alpha_{10} - \alpha_3 - \alpha_6 - \alpha_{10}}_{=0}) + \underbrace{(\alpha_4 + \alpha_5 + \alpha_9)}_\text{second component} = \]
 and we use the definition of the Hamming distance:
 \begin{equation}\label{d_z_bis_s_i_bis}
 = \big(d(s_{OPT}'',s_i'') - \alpha_3 - \alpha_6 - \alpha_{10}\big) + (\alpha_4 + \alpha_5 + \alpha_9).
 \end{equation}
 Since $(z'')^{(1)} = k'' = \big(s_{OPT}''\big)^{(1)}$,
 \[ \sum_{k=5}^{12} \alpha_k = \alpha_3 + \alpha_4 + \alpha_7 + \alpha_8 + \alpha_{11} + \alpha_{12} \]
 \begin{equation}\label{alpha5}
  \alpha_5 = \alpha_3 + \alpha_4 - \alpha_6 - \alpha_9 - \alpha_{10}. 
 \end{equation}
 Also
 \begin{equation}\label{epsilon1opt_stable}
  \alpha_4 + \alpha_8 + \alpha_9 \leqslant \epsilon_1 \cdot OPT,
 \end{equation}
 because $X$ is $\epsilon_1 \cdot OPT$-stable.
 Now we are ready to prove equation (\ref{lemma_kbis_completion}).
 
 \[ d(s_{OPT}'\cdot z'',s_i) \stackrel{\rm def.}{=} d(s_{OPT}',s_i') + d(z'',s_i'') \stackrel{(\ref{d_z_bis_s_i_bis})}{=} \]
 \[ \stackrel{(\ref{d_z_bis_s_i_bis})}{=} d(s_{OPT}',s_i') + d(s_{OPT}'',s_i'') - \alpha_3 - \alpha_6 - \alpha_{10} + \alpha_4 + \alpha_5 + \alpha_9 \stackrel{(\ref{alpha5})}{=} \]
 \[ \stackrel{(\ref{alpha5})}{=} \underbrace{d(s_{OPT},s_i)}_{\leqslant OPT} + 2(\hspace{-10pt} \underbrace{\alpha_4}_{\stackrel{(\ref{epsilon1opt_stable})}{\leqslant} \epsilon_1 \cdot OPT} \hspace{-15pt} - \alpha_6 - \alpha_{10}) \stackrel{(\ref{epsilon1opt_stable})}{\leqslant} (1+2\epsilon_1) \cdot OPT. \]
 $\hfill\blacksquare$
\end{proof}

\section{An auxiliary optimization problem}
\label{sec:aux_prob}
In this section we will consider the optimization problem obtained after guessing the number of 1's in the two parts and fixing the ``no-star part'' of the outcome.
It has variables for all the positions of the ``star part'' and constraints for all the original votes $s_i \in S$.

Let us define the optimization problem in terms of the integer program $IP_{(\ref{ip_q})-(\ref{ip_p01})} (Y,k')$ by (\ref{ip_q})-(\ref{ip_p01}):
\begin{equation}\label{ip_q}
 \min q
\end{equation}
\begin{equation}\label{ip_no1}
 (s')^{(1)} = k'
\end{equation}
\begin{equation}\label{ip_dist}
 \forall_{i \in \{1,2,\dots,n\}} \quad d(s',s_i') \leqslant q-d(s_{ALG}'',s_i'')
\end{equation}
\begin{equation}\label{ip_qg0}
 q \geqslant 0
\end{equation}
\begin{equation}\label{ip_p01}
 \forall_{j \in \{1,2,\dots,\beta\}} \quad s'[j] \in \{0,1\}
\end{equation}
where $Y \subseteq S, k=k'+k''$, and $s_{ALG}''$ is the $k''$-completion of $(p(Y))''$. Recall that $\beta=p^{(*)}(Y)$ and $(p(Y))''$ is the ``no-star part'' of the~pattern $p(Y)$.

In the LP relaxation (\ref{ip_p01}) is replaced with:
\begin{equation}\label{lp_p01}
 \forall_{j \in \{1,2,\dots,\beta\}} \quad s'[j] \in [0,1]
\end{equation}


Constraints (\ref{ip_q})-(\ref{ip_qg0}),(\ref{lp_p01}) are linear because
\[(s')^{(1)}=\sum_{j=1}^\beta s'[j],\]
\[d(s',s_i') = \sum_{j=1}^\beta \Big(\chi(s_i'[j]=0)\cdot s'[j] + \chi(s_i'[j]=1)\cdot(1-s'[j])\Big)\] are linear functions of $s'[j]$, where $j \in \{1,2,\dots,\beta\}$.

\begin{lemma}\label{lem_aprox_ip}
 $\forall_{R \in \mathbb{N}_{\geqslant 1}, Y \subseteq S, |Y| \leqslant R, k' \in \mathbb{N}, \epsilon_2 > 0}$ we can find $(1+2\epsilon_2)$-approximation solution for $IP_{(\ref{ip_q})-(\ref{ip_p01})} (Y,k')$ by solving the $LP$ and considering at most
 \[ (3n)^{\frac{3R \ln(2)}{(\epsilon_2)^2}} + m^{\frac{3R^2\ln(6)}{(\epsilon_2)^2}} \quad\text{cases.} \] 
\end{lemma}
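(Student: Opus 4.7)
The plan is to follow the LP-rounding framework of Li, Ma, and Wang for the Closest String PTAS~\cite{ptascs}, with an added $k'$-completion step to enforce the cardinality constraint~(\ref{ip_no1}). First I solve $LP_{(\ref{ip_q})-(\ref{ip_qg0}),(\ref{lp_p01})}(Y,k')$ in polynomial time and obtain a fractional optimum $\tilde{s}'$ with LP value $\tilde{q}$; since the LP is a relaxation, $\tilde{q}\leq q^*$, where $q^*$ denotes the IP optimum. The task is to round $\tilde{s}'$ to an integral $s'$ with $(s')^{(1)}=k'$ achieving objective at most $(1+2\epsilon_2)\,q^*$. I split the analysis into two regimes according to a threshold $T=\Theta\!\left(R\ln(3n)/\epsilon_2^{\,2}\right)$, each regime accounting for one of the two summands in the final case count.

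Regime~A ($\tilde{q}\geq T$) uses independent randomized rounding, setting $s'[j]=1$ with probability $\tilde{s}'[j]$. For each of the $n$ distance constraints and for the two-sided cardinality constraint, a Chernoff inequality gives a failure probability at most $\exp(-\epsilon_2^{\,2}\tilde{q}/3)\leq(3n)^{-R}$; a union bound then produces a rounding whose distances exceed $\tilde{q}$ by at most $\epsilon_2\tilde{q}$ and whose $(s')^{(1)}$ deviates from $k'$ by at most $\epsilon_2\tilde{q}$. A subsequent $k'$-completion — essentially the argument of Lemma~\ref{lemma_kbiscompletion} — adds a further $\epsilon_2\tilde{q}$ per distance, giving the required factor $(1+2\epsilon_2)$. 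Derandomization is performed by conditional expectations against a suitable pessimistic estimator, restricted to the $O(R/\epsilon_2^{\,2})$ coordinates of largest conditional contribution, producing a branching tree of size $(3n)^{3R\ln 2/\epsilon_2^{\,2}}$.

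Regime~B ($\tilde{q}<T$) exploits that then $q^*\leq T$: an integral optimum differs from any canonical rounding of $\tilde{s}'$ in at most $O(R^2/\epsilon_2^{\,2})$ positions, so I enumerate every such subset of positions together with its $\{0,1\}$-assignment, checking feasibility and objective for each. This yields $m^{3R^2\ln 6/\epsilon_2^{\,2}}$ candidates — the second summand — and suffices to locate an exact optimum for the residual subproblem.

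The main obstacle is the coexistence of the cardinality constraint with the $n$ distance constraints under rounding: independent rounding satisfies each only in expectation, so both must be concentrated via Chernoff and the residual discrepancy corrected by a $k'$-completion. This is what forces the factor $2$ in $(1+2\epsilon_2)$ rather than $(1+\epsilon_2)$, and a secondary subtlety is designing the pessimistic estimator so that conditioning on $O(R/\epsilon_2^{\,2})$ bits suffices to control the distance rounding and cardinality deviation at once.
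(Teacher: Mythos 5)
Your Regime~A is essentially the paper's Case~3: LP relaxation, independent randomized rounding, Chernoff plus union bound over the $n$ distance constraints, a $k'$-completion to repair the cardinality, and an extra $\epsilon_2\cdot q^{IP}$ loss from that repair giving $(1+2\epsilon_2)$. But your case decomposition is wrong, and both summands in the stated bound are misattributed. In the paper the two terms come from two \emph{brute-force} cases keyed to the structure of the instance, not to the LP value: if $\beta=p^{(*)}(Y)\leqslant \frac{3R\ln(3n)}{\epsilon_2^2}$ one enumerates all $2^\beta\leqslant(3n)^{3R\ln 2/\epsilon_2^2}$ strings $s'$; if $k'\leqslant\frac{3R^2\ln 6}{\epsilon_2^2}$ one enumerates all $\binom{\beta}{k'}\leqslant m^{3R^2\ln 6/\epsilon_2^2}$ placements of the $k'$ ones. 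Only when \emph{both} $\beta$ and $k'$ are large does the rounding argument run, and it is precisely the largeness of $\beta$, via Lemma~\ref{size_p_star_y} ($\beta\leqslant R\cdot OPT\leqslant R\cdot q^{IP}$), that forces $q^{IP}$ to be large enough for the Chernoff bounds; the largeness of $k'$ is needed separately so that the two-sided bound $2\exp(-\tfrac13(\epsilon_2/R)^2k')<\tfrac13$ holds and so that the deviation $(\epsilon_2/R)k'\leqslant(\epsilon_2/R)\beta\leqslant\epsilon_2\cdot OPT$ is small. Your single threshold on $\tilde q$ does not recover either brute-force case, and your cardinality Chernoff bound is asserted with failure probability $\exp(-\epsilon_2^2\tilde q/3)$ although the relevant sum concentrates around $k'$, not $\tilde q$; the small-$k'$ situation is left unargued.

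Regime~B contains the fatal error. From $\tilde q\leqslant q^*$ (the LP is a relaxation) the hypothesis $\tilde q<T$ gives \emph{no} upper bound on $q^*$, so the claim ``then $q^*\leqslant T$'' is backwards. Even granting a bound $q^*=O(T)$, there is no justification that an integral optimum differs from a canonical rounding of $\tilde s'$ in only $O(R^2/\epsilon_2^2)$ positions — on coordinates where the votes are split the LP can sit at $1/2$ and the IP optimum is unconstrained by any such proximity. And even if such a bound held with radius $O(T)=O(R\ln(3n)/\epsilon_2^2)$, enumerating subsets of that many positions out of $m$ costs $m^{\Theta(R\ln n/\epsilon_2^2)}$, which is superpolynomial and does not match the claimed $m^{3R^2\ln 6/\epsilon_2^2}$. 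Finally, the $(3n)^{3R\ln 2/\epsilon_2^2}$ term is not the size of a derandomization branching tree; derandomization in the paper is the standard polynomial-time method of conditional probabilities borrowed from \cite{ptascs}, applied only inside the large-$\beta$, large-$k'$ case.
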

\begin{proof}
Let us fix constants $\epsilon_2 \in (0,\frac{1}{2})$ (for $\epsilon_2 \geqslant \frac{1}{2}$ we could use $2$-approximation from \cite{markakis}). Consider three cases:
\paragraph{\textbf{Case 1:}} $\beta \leqslant \frac{3	R\ln(3n)}{(\epsilon_2)^2}$\\
There is $2^\beta$ possibilities for $s'$.\\
\[2^\beta \leqslant 2^{\frac{3R\ln(3n)}{(\epsilon_2)^2}} = e^{\ln(3n) \frac{3R \ln(2)}{(\epsilon_2)^2}} = (3n)^{\frac{3R \ln(2)}{(\epsilon_2)^2}} \in Poly(n),\]
because $\epsilon_2$ and $R$ are fixed constants. 
So we will check (in polynomial time) all possibilities for $s'$ and we will find optimal solution for the integer program.
\paragraph{\textbf{Case 2:}} $k'\leqslant \frac{3R^2\ln(6)}{(\epsilon_2)^2}$\\
There is $Poly(m)$ possibilities for $s'$ because we can upperbound the number of possibilities of setting 1's into $\beta$ positions by:
\[ {\beta \choose  k'} \leqslant \beta^{k'} \leqslant \beta^{\frac{3R^2\ln(6)}{(\epsilon_2)^2}} \leqslant m^{\frac{3R^2\ln(6)}{(\epsilon_2)^2}} \in Poly(m), \]
because $\epsilon_2$ and $R$ are fixed constants.
\paragraph{\textbf{Case 3:}} $\;\beta > \frac{3R\ln(3n)}{(\epsilon_2)^2} \;\wedge\; k' > \frac{3R^2\ln(6)}{(\epsilon_2)^2}$\\
We denote an optimal solution of the~$IP_{(\ref{ip_q})-(\ref{ip_p01})} (Y,k')$ by $\big((s')^{IP},q^{IP}\big)$. Let us use LP relaxation and denote an optimal solution of the LP by $\big((s')^{LP},q^{LP}\big)$. Obviously we have $q^{LP} \leqslant q^{IP}$. We can solve the LP in polynomial time but we may obtain a fractional solution. We want to round it independently. We will use a randomized rounding defined by distributions on each position $j \in \{1,2,\dots,\beta\}$:
\begin{equation}\label{pse1}
 P\big(s'[j]=1\big)= (s')^{LP}[j], \quad P\big(s'[j]=0\big)= 1-(s')^{LP}[j].
\end{equation}
We can estimate the expected value of a distance to such a random solution $s'$:
\[ \forall_{i \in \{1,2,\cdots,n\}} \quad \mathbb{E}\big[ d(s',s_i') \big] \stackrel{\text{def.}}{=} \mathbb{E}\left[ \sum_{j=1}^\beta \Big|s'[j]-s_i'[j]\Big| \right] = \]
\begin{alignat*}{6}
 = \mathbb{E}\Bigg[ &\sum_{j=1}^\beta& \Big( \quad &\chi(s_i'[j]=0)\cdot s'[j] \quad &+ \quad &\chi(s_i'[j]=1)\cdot (1-s'[j]) &&\Big) \Bigg] \stackrel{\text{lin. of }\mathbb{E}}{=}  \\
 \stackrel{\text{lin. of }\mathbb{E}}{=} &\sum_{j=1}^\beta& \Big( \quad &\chi(s_i'[j]=0)\cdot \mathbb{E}\big[s'[j]\big]\quad &+ \quad &\chi(s_i'[j]=1)\cdot \mathbb{E}\big[1-s'[j]\big] &&\Big) \stackrel{(\ref{pse1})}{=} \\
 \stackrel{(\ref{pse1})}{=} &\sum_{j=1}^\beta& \Big( \quad &\chi(s_i'[j]=0)\cdot (s')^{LP}[j]\quad &+ \quad &\chi(s_i'[j]=1)\cdot \big(1-(s')^{LP}[j]\big) &&\Big) \stackrel{\text{def.}}{=}
\end{alignat*}
\begin{equation}\label{exp_d_p_si}
  \stackrel{\text{def.}}{=} d\big((s')^{LP},s_i'\big) \stackrel{(\ref{ip_dist})}{\leqslant} q^{LP}-d(s_{ALG}'',s_i'').
\end{equation}
$d(s',s_i')$ is a sum of $\beta$ independent 0-1 variables. For $\epsilon' \in (0,1)$ using Chernoff's bound \cite{motvani} we have:
\[ P\Big(d(s',s_i') \geqslant (1+\epsilon') \cdot\mathbb{E}\big[d(s',s_i')\big] \Big) \leqslant \exp\left( -\frac{1}{3}(\epsilon')^2 \cdot \mathbb{E}\big[d(s',s_i')\big] \right). \]
If we take $\epsilon' = \frac{\epsilon_2 \cdot q^{IP}}{\mathbb{E}[d(s',s_i')]}$ then we obtain:
\[ \exp\left( -\frac{1}{3} \cdot \frac{(\epsilon_2)^2 \cdot (q^{IP})^2}{\mathbb{E}\big[d(s',s_i')\big]} \right) \geqslant P\Big(d(s',s_i') \geqslant \mathbb{E}\big[d(s',s_i')\big] + \epsilon_2 \cdot q^{IP} \Big) \stackrel{(\ref{exp_d_p_si})}{\geqslant} \]
\begin{equation}\label{expgpbb}
 \stackrel{(\ref{exp_d_p_si})}{\geqslant} P\Big(d(s',s_i') \geqslant q^{LP}-d(s_{ALG}'',s_i'') + \epsilon_2 \cdot q^{IP} \Big).
\end{equation}
We want to know an upperbound for the probability that we make an error greater than $\epsilon_2 \cdot q^{IP}$ for at least one vote:
\[ P\Big(\exists_{i \in \{1,2,\dots,n\}}: d(s',s_i') \geqslant q^{LP}-d(s_{ALG}'',s_i'') + \epsilon_2 \cdot q^{IP} \Big) \stackrel{(\ref{expgpbb})}{\leqslant} \]
\begin{equation}\label{pbb_error_all}
 \stackrel{(\ref{expgpbb})}{\leqslant} n \cdot \exp\left( -\frac{1}{3} \cdot \frac{(\epsilon_2)^2 \cdot (q^{IP})^2}{\mathbb{E}\big[d(s',s_i')\big]} \right) \leqslant n \cdot \exp\left( -\frac{1}{3} (\epsilon_2)^2 \cdot q^{IP} \right),
\end{equation}
where the last inequality is because of:
\[ \mathbb{E}\big[ d(s',s_i') \big] \stackrel{(\ref{exp_d_p_si})}{\leqslant} q^{LP}-d(s_{ALG}'',s_i'') \leqslant q^{IP}. \]
We want to further upperbound the probability in (\ref{pbb_error_all}). From the assumption about $\beta$ and from Lemma~\ref{size_p_star_y} we have:
\[ \frac{3R\ln(3n)}{(\epsilon_2)^2} < \beta \stackrel{\rm{Lem. \ref{size_p_star_y}}}{\leqslant} |Y| \cdot OPT \leqslant R \cdot OPT \leqslant R \cdot q^{IP}, \quad \text{equivalently}\]
\begin{equation}\label{1deltalnexp}
 \frac{1}{3} > n \cdot \exp\left( -\frac{1}{3} (\epsilon_2)^2 \cdot q^{IP} \right).
\end{equation}
So, finally we have:
\begin{equation}\label{pexistslqd}
 P\Big(\exists_{i \in \{1,2,\dots,n\}}: d(s',s_i') \geqslant q^{LP}-d(s_{ALG}'',s_i'') + \epsilon_2 \cdot q^{IP} \Big) \stackrel{(\ref{pbb_error_all}),(\ref{1deltalnexp})}{<} \frac{1}{3}.
\end{equation}
So with probability at least $\frac{2}{3}$ we obtain:
\[ \forall_{i\in\{1,2,\dots,n\}} \quad d(s' \cdot s_{ALG}'', s_i) = d(s', s_i') + d(s_{ALG}'', s_i'') \stackrel{(\ref{pexistslqd})}{<} \]
\begin{equation}\label{eps2_approx_ip}
 \stackrel{(\ref{pexistslqd})}{<} q^{LP}-d(s_{ALG}'', s_i'') +\epsilon_2\cdot q^{IP} + d(s_{ALG}'', s_i'') \leqslant (1+\epsilon_2)\cdot q^{IP}.
\end{equation}
We can also obtain a wrong number o 1's. The solution $s_{ALG}'$ for that is to take the $k'$-completion of $s'$. We will show that the additional error for such operation is not so big. Expected number of 1's in $s'$ is equal $k'$:
\[ \mathbb{E}\big[ (s')^{(1)} \big] \stackrel{\rm{def.}}{=} \mathbb{E}\left[ \sum_{j=1}^\beta s'[j] \right] \stackrel{\text{lin. of }\mathbb{E}}{=} \sum_{j=1}^\beta (s')^{LP}[j] \stackrel{\rm{def.}}{=} \big((s')^{LP}\big)^{(1)} \stackrel{(\ref{ip_no1})}{=} k'. \]
We want to know how much we lose taking the $k'$-completion. Similar as before, $(s')^{(1)} = \sum_{j=1}^\beta s'[j]$ is a sum of $\beta$ independent 0-1 variables. For $\epsilon'' \in (0,1)$ using Chernoff's bound \cite{motvani} we have:
\[ P\left( (s')^{(1)} \geqslant (1+\epsilon'')\cdot k' \right) \leqslant \exp\left( -\frac{1}{3}(\epsilon'')^2 \cdot k' \right), \]
\[ P\left( (s')^{(1)} \leqslant (1-\epsilon'')\cdot k' \right) \leqslant \exp\left( -\frac{1}{2}(\epsilon'')^2 \cdot k' \right). \]
Taking both inequalities together, $\epsilon'' = \frac{\epsilon_2}{R}$ and using assumption $k' > \frac{3R^2\ln(6)}{(\epsilon_2)^2}$ we have:
\[ P\left( \left|(s')^{(1)}-k'\right| \geqslant \epsilon''\cdot k' \right) \leqslant 2\cdot \exp\left( -\frac{1}{3}(\epsilon'')^2 \cdot k' \right) \leqslant \]
\[ \leqslant 2\cdot \exp\left( -\frac{1}{3}\frac{(\epsilon_2)^2}{R^2} \cdot k' \right) < \frac{1}{3}. \]
So with probability at least $\frac{2}{3}$ the error from taking the $k'$-completion is not greater than $ \epsilon'' \cdot k' = \frac{\epsilon_2}{R} \cdot k' \leqslant \frac{\epsilon_2}{R} \cdot \beta \stackrel{\rm{Lem. \ref{size_p_star_y}}}{\leqslant}  \frac{\epsilon_2}{R} \cdot |Y| \cdot OPT \leqslant \epsilon_2 \cdot OPT \leqslant \epsilon_2 \cdot q^{IP}$.

Combining the above with (\ref{eps2_approx_ip}) we obtain a $(1+2\epsilon_2)$-approximate solution with probability at least $\frac{1}{3}$. We may derandomize the algorithm analogously to how it was done in the PTAS for the Closest String problem  \cite{ptascs}. For more on derandomization techniques see~\cite{derandomization}. $\hfill\blacksquare$

\end{proof}

\section{Algorithm and its complexity analysis}
\label{sec:alg}
Now we are ready to combine the ideas into a single algorithm.

\begin{algorithm}
\caption{ALG(R)}
\label{alg_mav}
\begin{algorithmic}[1]
\REQUIRE $S=\{s_1,s_2,\dots,s_n\} \in (\{0,1\}^m)^n, 0 \leqslant k \leqslant m,R \in \mathbb{N}_{\geqslant 1}$
\ENSURE $s_{ALG} \in \{0,1\}^m$
\FOR{each $R$-element subset $Y=\{s_{i_1},s_{i_2},\dots,s_{i_R}\} \subseteq S$}\label{forRsubset}
  \FOR{each division $k$ into two parts $k=k'+k''$}
   \STATE{$s''_{ALG} \leftarrow k''$-completion of $(p(Y))''$ \\(if not possible, then skip this inner iteration)}
   \STATE{$s'_{ALG} \leftarrow$ approximation solution of $IP_{(\ref{ip_q})-(\ref{ip_p01})} (Y,k')$ using Lemma \ref{lem_aprox_ip}\\ (if $LP_{(\ref{ip_q})-(\ref{ip_qg0}),(\ref{lp_p01})}(Y,k')$ infeasible, then skip this inner iteration)} \label{approx_opt_problem}
   \STATE{evaluate $s'_{ALG}\cdot s''_{ALG}$ by computing $ \max_{i\in \{1,2,\dots,n\}} d(s_i,s'_{ALG}\cdot s''_{ALG})$}
  \ENDFOR
\ENDFOR\label{forRsubsetend}
\STATE{$s_{ALG} \leftarrow$ the best solution from a loop in lines \ref{forRsubset}-\ref{forRsubsetend}}
\end{algorithmic}	
\end{algorithm}
It remains to argue that for a large enough parameter $R$ the above algorithm will at some point consider a subset of votes $X$
that leads to an accurate enough approximation of the Minimax objective function of our problem.
\begin{theorem}
 $\forall_{\epsilon \in (0,1)}$ we may compute a $(1+\epsilon)$-approximate solution to Minimax Approval Voting in $O\big(Poly(n,m)\big)$ time.
\end{theorem}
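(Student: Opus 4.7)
The plan is to fix $R$ as a constant depending only on $\epsilon$ and show that, among the at most $n^R \cdot (k+1)$ inner iterations of $ALG(R)$, at least one produces a $(1+\epsilon)$-approximate solution. Concretely, set $\epsilon_1 = \epsilon_2 = \epsilon/6$ and $R = \lceil 1/\epsilon_1 \rceil$. By Lemma~\ref{lem_exists_x} there exists an $\frac{OPT}{R}$-stable subset $X \subseteq S$ with $|X| = R$, and since $1/R \leq \epsilon_1$, this subset is in particular $(\epsilon_1 \cdot OPT)$-stable. Because the outer loop iterates over every $R$-element subset of $S$, some iteration considers $Y = X$, and inside it some iteration of the inner loop uses $k'' = (s_{OPT}'')^{(1)}$ and $k' = k - k''$.

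In that iteration, Lemma~\ref{lemma_kbiscompletion} guarantees that $s_{OPT}' \cdot s_{ALG}''$ has distance at most $(1+2\epsilon_1) \cdot OPT$ from every vote, where $s_{ALG}''$ is the $k''$-completion of $(p(X))''$. Because $(s_{OPT}')^{(1)} = k - k'' = k'$, the string $s_{OPT}'$ is feasible for $IP_{(\ref{ip_q})-(\ref{ip_p01})}(X, k')$, so its optimum satisfies $q^{IP} \leq (1+2\epsilon_1) \cdot OPT$. Lemma~\ref{lem_aprox_ip} then produces a $(1+2\epsilon_2)$-approximate solution $s_{ALG}'$ with
\[
\max_{i \in \{1,\dots,n\}} d(s_{ALG}' \cdot s_{ALG}'', s_i) \leq (1+2\epsilon_2)\, q^{IP} \leq (1+2\epsilon_1)(1+2\epsilon_2) \cdot OPT \leq (1+\epsilon) \cdot OPT,
\]
where the final inequality expands $(1+\epsilon/3)^2 = 1 + 2\epsilon/3 + \epsilon^2/9$ and uses $\epsilon < 1$ to bound $\epsilon^2/9 \leq \epsilon/9$. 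Since $ALG(R)$ returns the best solution among all iterations, its output is $(1+\epsilon)$-approximate.

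For the running time, the outer loop contributes at most $n^R$ iterations, the inner loop contributes at most $m+1$ iterations, and each invocation of the procedure from Lemma~\ref{lem_aprox_ip} runs in time $(3n)^{3R\ln(2)/(\epsilon_2)^2} + m^{3R^2\ln(6)/(\epsilon_2)^2}$, to which one adds the polynomial cost of solving the LP relaxation and of evaluating the objective. For fixed $\epsilon$ all these exponents are constants, so the total running time is $Poly(n,m)$. The randomization inside Lemma~\ref{lem_aprox_ip} can be removed by derandomizing the Chernoff-based rounding through conditional expectations, exactly as in the Closest String PTAS of \cite{ptascs}.

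The main obstacle I expect is the multiplicative composition of the two error sources: the factor $1+2\epsilon_1$ coming from the $k''$-completion at the consensus positions and the factor $1+2\epsilon_2$ from the randomized LP rounding on the star positions. The argument has to thread through the quantity $q^{IP}$ rather than directly comparing the algorithm's output against $OPT$, because the natural bound on the IP's optimum is itself only $(1+2\epsilon_1) \cdot OPT$; getting the product below $1+\epsilon$ is what forces both $\epsilon_1$ and $\epsilon_2$ to be chosen as a suitably small constant fraction of $\epsilon$, which in turn determines $R$ and hence the degree of the polynomial running time.
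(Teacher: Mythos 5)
Your proposal is correct and follows essentially the same route as the paper's own proof: the same choice of constants (your $\epsilon_1=\epsilon_2=\epsilon/6$ and $R=\lceil 6/\epsilon\rceil$ match the paper's $\epsilon_0=\epsilon/3$ with stability parameter $\epsilon_0/2$ and $R=\lceil 2/\epsilon_0\rceil$), the same use of Lemma~\ref{lem_exists_x} to guarantee a stable subset appears in the enumeration, the same feasibility argument bounding $q^{IP}$ via Lemma~\ref{lemma_kbiscompletion}, and the same multiplicative composition of the two error factors through Lemma~\ref{lem_aprox_ip}. Nothing is missing.
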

\begin{proof}
 Let $\epsilon_0 = \frac{\epsilon}{3} < \frac{1}{3}$.
 
 By Lemma \ref{lem_exists_x}, there exists an $\frac{\epsilon_0 \cdot OPT}{2}$-stable set of votes $X\subseteq S$ of cardinality $|X|=R=\lceil \frac{2}{\epsilon_0} \rceil$.
 
 Consider algorithm ALG(R). In one iteration it will consider $X$ and $k',k''$ such that $(s_{OPT}')^{(1)}=k'$. Recall that $s_{ALG}''$ is the specific $k''$-completion of $\big(p(X)\big)''$. By Lemma \ref{lemma_kbiscompletion} we have: 
 \[ d(s'_{OPT} \cdot s_{ALG}'',s_i) \leqslant (1+\epsilon_0)\cdot OPT,\]
 hence $\big(s'=s_{OPT}',q=(1+\epsilon_0)\cdot OPT \big)$ is a feasible solution to $IP_{(\ref{ip_q}-\ref{ip_p01})} (X,k')$ and the optimal value of $IP_{(\ref{ip_q}-\ref{ip_p01})} (X,k')$ is at most $(1+\epsilon_0)\cdot OPT$.
 
 By Lemma \ref{lem_aprox_ip} with $\epsilon_2 = \frac{\epsilon_0}{2}$ we find a $(1+\epsilon_0)$-approximate solution $\big( s_{ALG}',q_{ALG} \big)$ to $IP_{(\ref{ip_q}-\ref{ip_p01})} (X,k')$. So we have:
 \[q_{ALG} \leqslant (1+\epsilon_0) \cdot (1+\epsilon_0) \cdot OPT \stackrel{\epsilon_0<1}{\leqslant} (1+3\epsilon_0) \cdot OPT = (1+\epsilon) \cdot OPT.\]
 
 It remains to observe, that $s_{ALG}=s_{ALG}' \cdot s_{ALG}''$ is a solution to $MAV$ of cost $q_{ALG} \leqslant (1+\epsilon) \cdot OPT$.
 
 The algorithm examined $O(n^R)=O\big(n^{\lceil \frac{6}{\epsilon} \rceil} \big) \in O(Poly(n))$ subsets $Y$, $O(m)$ choices of $k'$ and each time considered
 \[ O\left( (3n)^{\frac{108 \cdot \lceil \frac{6}{\epsilon} \rceil \cdot \ln(2) }{\epsilon^2}} + m^{\frac{108 \cdot {\lceil \frac{6}{\epsilon} \rceil}^2 \cdot \ln(6)}{\epsilon^2}} \right) \in O(Poly(n,m)) \;\, \text{cases.}\] $\hfill\blacksquare$
\end{proof}

\section{Concluding remarks}
We showed the existence of a PTAS for Minimax Approval Voting by considering all subsets of a fixed size $R$. If not the discovered supermodularity for the~inaccuracy function $ina(\cdot)$, we would simply consider all subsets of size at most $R$. Although the supermodularity was not essential for our result, it shows that larger subsets of votes are generally more stable (in the sense of definition in Lemma \ref{lem_exists_x}). It seems to suggest that an algorithm considering a smaller number of larger subsets of votes would potentially be more efficient in practice. Perhaps the most interesting open question is whether by randomly sampling a number of subsets of votes to examine, one could obtain a more practical FPRAS for the~problem.

Another interesting direction is the optimization of the Minimax objective function subject to a restriction that the voting system must be incentive compatible. According to~\cite{markakis} the best possible approximation ratio in this setting is between $2-\frac{2}{k+1}$ and $3-\frac{2}{k+1}$, and a natural challenge is to narrow this gap.

Finally, we know the complexity of the two extreme objectives, i.e., Minimax and Minisum. The latter is easily optimized by selecting the $k$ most often approved candidates. The optimization problem for intermediate objectives such as optimizing the sum of squares of the Hamming distances remains unexplored, and it would be interesting to learn which objective functions are more difficult to approximate than Minimax in the context of Approval Voting systems.

\section*{Acknowledgments}
We want to thank Katarzyna Staniewicz for many helpful proofreading comments. Also we want to thank reviewers for their valuable suggestions. Krzysztof Sornat was supported by local grant 2139/M/II/14.

\end{document}